\newcommand{\revised}[1]{\textcolor{black}{#1}}
\DeclareMathOperator{\al}{\alpha}
\DeclareMathOperator{\faira}{\alpha^\star}
\DeclareMathOperator{\Q}{\mathbb{Q}}
\DeclareMathOperator{\ban}{\bar{a}}
\DeclareMathOperator{\bartau0}{\bar{\tau}_{0}}
\newcommand{\ann}[2]{#1_{\lcroof{#2}}}
\newcommand{\mb}[1]{\mathbb{#1}}
\newcommand{\mc}[1]{\mathcal{#1}}
\theoremstyle{plain}
\newtheorem{thm}{Theorem}
\newtheorem{thm}{Theorem}[chapter]
  \theoremstyle{plain}
  \newtheorem{assumption}[thm]{Assumption}
  \theoremstyle{plain}
  \newtheorem{lem}[thm]{Lemma}
  \theoremstyle{definition}
  \newtheorem{defn}[thm]{Definition}
  \theoremstyle{remark}
  \theoremstyle{remark}
  \newtheorem*{remUnnum}{Remark}
  \theoremstyle{plain}
  \newtheorem{prop}[thm]{Proposition}
  \theoremstyle{remark}
 \theoremstyle{definition}
  \theoremstyle{plain}
  \theoremstyle{plain}
  \newtheorem{cor}[thm]{Corollary}
\journal{}
\begin{document}

\begin{frontmatter}

 \title{Valuation Perspectives and Decompositions for Variable Annuities with GMWB riders}
 \author[Conc]{Cody~B.~Hyndman\corref{cor1}}
\ead{cody.hyndman@concordia.ca}

\author[Conc,Guard]{Menachem~Wenger\corref{cor2}}
\ead{menachem\_wenger@glic.com}
\cortext[cor1]{Corresponding author}
\cortext[cor2]{The views and opinions expressed in this paper are those of the individual author(s) and do not necessarily reflect the views of The Guardian Life Insurance Company of America.}
\address[Conc]{Department of Mathematics and Statistics, Concordia University, 1455 De Maisonneuve Blvd.\ W., Montr\'eal, Qu\'ebec, Canada H3G 1M8}
\address[Guard]{The Guardian Life Insurance Company of America, New York, NY}

\begin{abstract}
The guaranteed minimum withdrawal benefit (GMWB) rider, as an add on to a variable annuity (VA), guarantees the return of premiums in the form of periodic withdrawals while allowing policyholders to participate fully in any market gains. GMWB riders represent an embedded option on the account value with
a fee structure that is different from typical financial derivatives.  We consider fair
pricing of the GMWB rider from a financial economic perspective.  Particular focus is placed on the 
\revised{distinct} perspectives of the insurer and policyholder and the unifying relationship. 
We extend a decomposition of the VA contract into components that reflect term-certain payments 
and embedded derivatives to the case where the policyholder has the option to surrender, or lapse, the contract early.

\end{abstract}

\begin{keyword}
Variable Annuity \sep GMWB \sep optimal stopping

\end{keyword}

\end{frontmatter}

\section{\label{sec:Introduction}Introduction}

\subsection{\label{sub:Background}Background}

Variable deferred annuities have two phases: the accumulation period
and the annuitization period. During the accumulation period, premiums
are deposited with the insurer and can be actively managed by the
policyholder to achieve investment goals by allocating the funds
to a selection of investment funds. The policyholder may choose to
take partial withdrawals and/or surrender the contract, although the
proceeds will likely be subject to contingent deferred sales charges
(CDSC) and possible
tax penalties depending on the age of the policyholder. Upon annuitization
the policyholder cedes control over the funds and in return is guaranteed
a periodic stream of payments, ranging from a fixed number of years up to guaranteed for life. 
This phase protects annuitants from longevity risk.

Riders are optional add-ons to variable annuities (VA), 
providing additional benefits in return for which an additional
charge is subtracted annually from the account value (AV). 
The first riders introduced to the VA market were death benefit riders:
these guarantee a minimum death benefit to the beneficiaries if the
policyholder dies during the accumulation period. Death benefits evolved 
from a simple return of premium to increasingly
rich guarantees in the form of annual roll-ups and highest anniversary
values. The next form of riders introduced were guaranteed living
benefits (GLBs). The guaranteed minimum accumulation benefit riders
(GMABs) guarantee a minimum account value at a specific date (i.e.,
10 years from issue date), while the guaranteed minimum income benefit
riders (GMIBs) guarantee a minimum annuitization amount by giving
policyholders the choice between annuitizing a higher guarantee base
at contractually specified annuitization rates or the current account
value at the current annuitization rates.

Guaranteed minimum withdrawal benefit riders (GMWBs) were introduced
in 2002 and guarantee the policyholder will recover at least the total
premiums paid into the policy in the form of periodic withdrawals,
subject to the annual withdrawals not exceeding a contractual percentage
of the premiums. By allowing policyholders to remain in the accumulation
phase and retain full control of their investments, policyholders
benefit from the upside potential of equity investments while being protected
from downside risk. GMWBs evolved into the guaranteed lifetime withdrawal
benefit riders (GLWBs) which guarantee the annual maximal withdrawals
for life. GMWB and GLWB riders represent embedded
financial put options on the account values and techniques from mathematical
finance have been used to value these contracts.

The fee structures of VA riders add complexity to pricing and risk
management processes relative to the standard financial equity market
derivatives.  For a standard financial option a single upfront premium 
is charged which has no impact on the future random payoffs. 
No upfront fees are charged for GMWB riders but instead fees are deducted 
periodically from the AV to pay for the rider where the
fees are proportional to the AV. The AV is influenced by the withdrawal
behaviour of the policyholder and fee revenue stops in
the event of death or surrender. As such there are multiple sources
of uncertainty surrounding the fees. As well, an increase in the fee
rate results in higher fee income but it also creates a drag
on the AV, potentially causing it to reach zero faster which would result
in earlier termination of fee revenues and increased rider guarantee
payouts.

It is our belief that the GMWB and GLWB riders play a more prominent role 
in driving sales of VAs and their accompanying profits rather than as a source of direct profit for insurers.
Shortly, we point to the consensus among the early papers that these
riders were underpriced, supporting this hypothesis that they were
only a means to increase VA sales. Indeed,
reinsuring all or most of the risk was a popular risk management strategy for the initial
GMWB products. Reinsurance premiums increased as reinsurers became
more informed of the high risk embedded in these products. Around
the time of the financial crisis in 2008 reinsurers stopped offering
coverage altogether on GMWB and GLWB riders at which point the importance
of internal dynamic hedging programs rose rapidly.

With this in mind, we consider pricing and hedging the GMWB product
in a simplified framework consistent with the no-arbitrage principle
from financial economics. GLB riders have grown increasingly complex in recent years. 
Added features range from periodic ratchets and annual roll-ups to specific
one-time bonuses if certain criteria are met. These features
were designed to both increase the product appeal and entice policyholders
to delay withdrawals to the benefit of the insurer. It is evident that the GLWB riders have
come to define the VA market. GMWB riders were the precursor to the
GLWBs and as such, a mathematical analysis of the GMWB product is
interesting in its own right, and can provide insights that may be
extended to more complex products.

\subsection{\label{sub:ProductSpec}Product Specifications and Notation}

An underlying VA contract plus a GMWB rider is issued at time $t=0$ to
a policyholder of age $x$ and an initial premium $P$ is received.
We assume no subsequent premiums. The premium is invested into a fund
which perfectly tracks a risky asset $S=\{S_{t};t\geq0\}$ with no
basis risk. The rider fee rate $\al$ is applied to the account value $W=\{W_{t};t\geq0\}$.
Fees are deducted from the account value as long as the contract is in force and the
account value is positive. 

A guaranteed maximal withdrawal rate $g$ is contractually specified
and the rider specifies that up to the amount $G:=gP$ can be withdrawn annually until $P$ is recovered, regardless of the evolution of $\{W_{t}\}$.
If the account value
hits zero, then the policyholder receives withdrawals at rate $G$
until the initial premium has been recovered. Policyholders always have the option of withdrawing any amount provided it
does not exceed the remaining account value.  If annual withdrawals
exceed $G$ while the account value is still positive, then a surrender
charge is applied to the withdrawals and a reset feature may reduce
the guarantee value, e.g., the remaining portion of the initial premium
not yet recovered. Policyholders also have the option of surrendering early (the terminology of lapses and surrenders are used interchangeably) and receiving the account value less a surrender charge. Any guarantee value is forfeited by surrendering. 

Assuming a static withdrawal strategy where $G$ is withdrawn annually 
 we set the maturity $T:=1/g$ so that cumulative withdrawals at time $T$ equal $P$. 
At time $T$ the
rider guarantee is worthless and the policyholder receives a terminal
payoff of the remaining account value, if it is positive. 
This assumption translates over to a real-world trend of no annuitizations and is justified since a high proportion of VAs are not annuitized.

\subsection{\label{sub:LiterRev}Literature Review}

The initial paper on pricing and hedging GMWB products,
by \citep{RefWorks:7}, employs continuous withdrawals and a standard geometric Brownian motion model for $\{S_{t}\}$ and considers two policyholder behaviour strategies. Under a static withdrawal strategy and no lapses
the contract is decomposed into a term certain component and a Quanto
Asian Put option. Numerical PDE methods are used to evaluate the ruin probabilities
for $\{W_{t}\}$ and the contract value $V_{0}$. A dynamic behaviour
strategy is considered where optimal withdrawals occur. This free boundary value
problem is solved numerically for $V_{0}$. It is found that the
optimal strategy reduces to withdrawing $G$ continuously unless $W_{t}$
exceeds a boundary value depending on the remaining guarantee balance
of $P-Gt$, in which case an arbitrarily large withdrawal rate is
taken and the policyholder should lapse. \citep{RefWorks:7} conclude
that the GMWB riders in effect in 2004 were underpriced.

The optimal behaviour approach is formalized in \citep{RefWorks:8}
where a singular stochastic control problem is posed. Unlike in \citep{RefWorks:7},
time dependency and a complete description of the auxiliary conditions
are included in this model. To facilitate numerical solutions for
the HJB equations a penalty approximation formulation is solved using
finite difference methods.  Consistent with \citep{RefWorks:7}, numerical results provide support  that the provision for optimal behaviour is quite valuable and insurers appeared to be underpricing GMWB riders. The optimal strategy consists of withdrawing at rate $G$ (continuously) except for in certain regions
of the state space where an infinite withdrawal rate is optimal, which
means to {}``withdraw an appropriate finite amount instantaneously
making the equity value of the personal account and guarantee balance
to fall to the level that it becomes optimal for him to withdraw {[}$G${]}''
(\citealp{RefWorks:8}). However, \citep{RefWorks:8} allow the policyholder
the option of withdrawing any amount of the unrecovered initial premium,
even if it exceeds the account value. In other words, if the account
value is zero, the policyholder can elect to receive the remaining
guarantee balance instantly subject to surrender charges rather than
receive $G$ annually. The impact of this assumption is amplified
by not including a reset feature in most of their work. The combination
of this is the main cause of arriving at optimal strategies differing
from \citep{RefWorks:7}. 

\citep{RefWorks:14} extend \citep{RefWorks:8} to an impulse control
problem representation where the control set allows for continuous
withdrawal rates not exceeding $G$ and instantaneous finite withdrawals.
This extension allows for modelling of more complex product features.

\citep{RefWorks:68} develop an extensive and comprehensive framework
to price any of the common guarantees available with VAs, assuming
that any policyholder events such as surrenders, withdrawals, or death
occurs at the end of the year. Deterministic mortality is assumed.
Monte-Carlo simulation is used to price the contracts assuming a deterministic
behaviour strategy for the policyholders. To price the contracts assuming
an optimal withdrawal strategy, a quasi-analytic integral solution
is derived and an algorithm is developed by approximating the integrals
using a multidimensional discretization approach via a finite mesh.
Hence, only a finite subset of all possible strategies are considered.
One drawback is that the valuation with optimal behaviour for a single
contract could take excessive computation time. 

Allowing for discrete withdrawals, \citep{RefWorks:61} consider a
number of guarantees under a more general financial model with stochastic
interest rates and stochastic volatility in addition to stochastic
mortality. In particular for GMWBs, a static behaviour strategy ($G$
withdrawn annually and no lapses) is priced using standard Monte Carlo
whereas an optimal lapse approach ($G$ withdrawn annually) is priced
with a Least Squares Monte Carlo algorithm. 

Upper and lower bounds on the price process for the GMWB are derived
in \citep{RefWorks:32} under stochastic interest rates and assuming
a static continuous withdrawal strategy of $G$ per year with no lapses.
\citep{RefWorks:32} also present a tangential result about the 
relationship between the insured and insurer perspectives which we generalize  
to include early surrender.

Ignoring mortality and working with a static withdrawal assumption
and no lapses, the primary focus of \citep{RefWorks:43} is on developing
semi-static hedging strategies under both a geometric Brownian motion
model and a Heston stochastic volatility model for the underlying
asset $\{S_{t}\}$. However, sufficient attention and detail is paid
to pricing the GMWB rider assuming the insured takes constant withdrawals
of $G/n$ at the end of each period where there are $n$ time steps
per year. \citep{RefWorks:43} observes that the contract (GMWB plus
VA) can be decomposed into a term certain component and a floating
strike Asian Call option on a modified process. Both a Monte Carlo
approach and a moment-matching log-normal approximation method (based
on \citealp{RefWorks:69}) are used to obtain results for increasing
$n$. 

\subsection{Overview}

\revised{The literature covers a wide range of theoretical and numerical approaches to modelling variable annuities with GMWB riders.  The models surveyed differ most importantly in how they integrate model complexity and policyholder behaviour.  Our approach begins with the model presented by \citep{RefWorks:7} and establishes several new analytical results and relationships.}

Based on the product specifications listed in \prettyref{sub:ProductSpec}, which shall be assumed throughout this paper,  optimal withdrawal behaviour reduces to withdrawing at rate $G$ or lapsing. The rider guarantee represents an intangible amount. Once the account value is zero, this amount is accessible
only through withdrawals at rate $G$, a product specification adopted
by both \citep{RefWorks:7} and \citep{RefWorks:61}. The work of
\citep{RefWorks:8} and \citep{RefWorks:14} do not reflect this and
therefore different results are obtained.

In \prettyref{sec:Valuation-of-GMWBS cts} we motivate our work by reviewing the continuous-time model of \citep{RefWorks:7}.  
In \prettyref{sec:val-persp} we formalize the relationship between the value processes for the GMWB rider from the point of view of both the insured and the insurer. 
We start with a restricted model which accounts for equity risk only.  We provide our first results on the existence and uniqueness of a fair fee and a decomposition of the contract into the account value and the guarantee in the no lapse case.  In \prettyref{sec:Extending-modelSurrenders} we extend the model to incorporate early surrenders, that is we incorporate policyholder behaviour risk, and subsequently generalize the decomposition of the value of the contract into the account value, the value of the guarantee under no-lapses, and the value of the option to lapse.  \prettyref{sec:conclusions} concludes and an appendix contains additional technical results and proofs.

\section{Valuation of GMWBs in a Continuous-Time Framework}
\label{sec:Valuation-of-GMWBS cts}
We first consider the continuous model introduced by \citep{RefWorks:7}. 
This will motivate the developments
in the following sections.

Let $(\Omega,\mc{F},\mb{P})$ be a complete probability space
where $\{B_{t}^{'}\}_{0\leq t\leq T}$ is a one dimensional standard
Brownian motion and $T<\infty$.
Define $\mc{F}_{t}:=\sigma\{B_{s}^{'};0\leq s\leq t\},$ for
all $t\in[0,T]$. Consider the financial market consisting of one
risky asset and one riskless asset. 
The unit price of the risky asset $\{S_{t}^{x,u}\}_{u\leq t\leq T}$
follows the geometric Brownian motion process\begin{equation}
dS_{t}=\mu S_{t}dt+\sigma S_{t}dB_{t}^{'},\quad t\geq u,\quad S_{u}=x.\label{eq:StP}\end{equation}
We assume a constant riskless rate $r$; therefore, the riskless asset is the money market account $M_{t}=e^{rt}$ 
for all $t\geq0$.

Applying Girsanov's theorem for Brownian motion, $\{S_{t}^{x,u}\}_{u\leq t\leq T}$
follows the process: \begin{equation}
dS_{t}=rS_{t}dt+\sigma S_{t}dB_{t},\quad t\geq u,\quad S_{u}=x.\label{eq:StQ}\end{equation}
where $\{B_{t}\}$ is a standard Brownian motion under the unique risk neutral measure
$\Q$ equivalent to $\mb{P}$. 
We work with the filtered probability space $(\Omega,\mc{F}_{T},\mb{F},\Q)$
where $\mb{F}=\{\mc{F}_{s}\}_{0\leq s\leq T}$. The financial market is complete and arbitrage-free.

We formulate our initial assumptions as follows. 
\begin{assumption}
\label{ass:Cts_nolapses} We assume a static withdrawal strategy where the policyholder takes
continuous withdrawals at a rate of $G:=gP$ per year. The maturity
is $T:=\frac{1}{g}$ years. Early lapses are not permitted. We also
assume $r>0$. 
\end{assumption}
The account value process $\{W_{t}\}$ is reduced by the instantaneous
rider fees $\al W_{t}dt$ and the instantaneous withdrawals $Gdt$.
By \prettyref{eq:StQ} the account value $W_{t}^{P,0}$ follows the
SDE \begin{equation}
dW_{t}=(r-\al)W_{t}dt+\sigma W_{t}dB_{t}-Gdt,\quad0\leq t\leq T,\quad W_{0}=P.\label{eq:dWt}\end{equation}

An additional constraint must account for the non-negativity of the account value. As stated in \citep{RefWorks:7}, it can be shown that $W$ satisfies
\revised{
\begin{equation}
W_{t}^{x,u}=\left(xe^{(r-\al-\frac{1}{2}\sigma^{2})(t-u)+\sigma(B_{t}-B_{u})}-G\int_{u}^{t}e^{(r-\al-\frac{1}{2}\sigma^{2})(t-s)+\sigma(B_{t}-B_{s})}ds\right)^+,\label{eq:W}\end{equation}} where $(w)^+=\max(0,w)$.

Considering $W_t^{P,0}$, the initial premium $P$ can be factored out of \prettyref{eq:W}
because $G=gP=P/T$. Let $\{Z_{t}\}$ denote the account value process
under a no-withdrawal strategy beginning with $Z_{0}=1$. Then $Z_{t}$
follows the SDE \[
dZ_{t}=(r-\al)Z_{t}dt+\sigma Z_{t}dB_{t},\quad0\leq t\leq T,\quad Z_{0}=1,\]
with the solution \[
Z_{t}=e^{(r-\al-0.5\sigma^{2})t+\sigma B_{t}}.\]
 By \prettyref{eq:W} $W_{t}$
can be expressed in terms of $Z_{t}$: \begin{align}W_t=\max\left[0,PZ_{t}-G\int_{0}^{t}\frac{Z_{t}}{Z_{s}}ds\right].\label{eq:W1}\end{align}
\citep{RefWorks:7} use a slight variant of this expression involving
the inverse of $Z$.

\section{Valuation Perspectives and Decompositions}
\label{sec:val-persp}

There are two perspectives from which to view the GMWB rider. A policyholder
is likely to view the VA and rider as one combined instrument and
would be interested in the total payments received over the duration
of the contract. On the other hand, although the rider is embedded
into the VA, the insurer might want to consider it as a separate instrument.
Namely, the insurer is interested in mitigating and hedging the additional
risk attributed to the rider.

\subsection{Policyholder Valuation Perspective}
Using standard actuarial notation we write the present value
of a continuously paid term-certain annuity as \[
\ann{\ban}{T}=\int_{0}^{T}e^{-rs}ds=\frac{1-e^{-rT}}{r}.\]
Denote by $V_{0}$ the value at $t=0$ for the complete contract (VA
plus GMWB rider). The risk-neutral discounted value of the withdrawals and any terminal account value is 
\begin{equation}
V_{0}(P,\alpha,g)=E_{\Q}\left[\int_{0}^{T}Ge^{-rs}ds+e^{-rT}W_{T}\right]=G\ann{\ban}{T} +e^{-rT}E_{\Q}[W_{T}],\label{eq:V0}\end{equation}
as in \citep{RefWorks:7}. We write $V_{0}$ when the parameters are understood. Note that $V_{0}$ is an implicit function of the fee rate $\alpha$.

Let $\{V_{t}\}_{0\leq t\leq T}$ denote the value process
of the contract where $V_{t}$ considers only future cash flows occurring after time $t$, discounted
to time $t$, conditional on the information $\mc{F}_{t}$.
Then \begin{align}
V_{t} & =E_{\Q}\left[\int_{t}^{T}e^{-r(s-t)}Gds+e^{-r(T-t)}W_{T}|\mc{F}_{t}\right]\label{eq:Vi}\\
 & =G\ann{\ban}{T-t}+e^{-r(T-t)}E_{\Q}[W_{T}^{P,0}|\mc{F}_{t}].\nonumber \end{align}
By the Markov property for $W_{t}$ 
we have $V_{t}=v(t,W_{t}),\  \Q$-a.s., for all $t\in[0,T]$, where $v:[0,T]\times\mb{R}_{+}\mapsto\mb{R}_{+}$
is given by\[
v(t,x)=G\ann{\ban}{T-t}+e^{-r(T-t)}E_{\Q}[W_{T}^{x,t}].\]

\label{AlternFormsV}Alternatively, 
$V_{0}$ can be decomposed into the sum of a term certain annuity
component and either a Quanto Asian Put option on $Z^{-1}$ (see \citealp{RefWorks:7})
or an Asian Call (floating strike) option on $Z$ (see \citealp{RefWorks:43}).
In either formulation the value function $v$ must be a function of
both $Z_{t}$ and some functional $f(\{Z_{s};0\leq s\leq t\})$ because
only the joint process $\{Z_{u,}f(\{Z_{s};0\leq s\leq u\})$ is Markovian.
Therefore we choose to continue working directly with \prettyref{eq:V0}.
However, the alternative forms prove to be useful when we explore numerical implementations, a binomial model, and mortality diversification in companion papers \citep{GMWB-binomial,GMWB-mortality}

\begin{defn}
\label{def:fairfeerate}
A \textbf{fair fee rate} is a rate \revised{$\faira \geq 0$} such that
\begin{equation}
V_{0}(P,\faira,g)=P.\label{eq:!existalpha}\end{equation}
\end{defn}
 Equation \prettyref{eq:!existalpha}
does not have a closed form solution and numerical methods must be
used to find $\faira$. Since $P$ can be factored out of
\prettyref{eq:W} it follows that $V_{0}(P,\alpha,g)=PV_{0}(1,\alpha,g)$ and from
\prettyref{eq:!existalpha} it is immediate that $\faira$ is independent
of $P$. 
A key question is the existence and uniqueness results of the fair fee rate. Before answering this question we first consider the insurer's valuation problem.

\subsection{\label{sub:Insurer-Valuation}Insurer Valuation Perspective}

The alternative viewpoint, applicable to the insurer, is to explicitly
consider the embedded guarantee represented by the rider as
a standalone product. Recall the trigger
time defined by \citep{RefWorks:7}. 
\begin{defn}
\label{def:The-trigger-time}The \textbf{trigger time}, $\tau$, defined by
the stopping time\[
\tau:=\inf\{s\in[0,T];W_{s}^{P,0}=0\},\]
 is the first hitting time of zero by the account value process. The
convention $\inf(\emptyset)=\infty$ is adopted. If $\tau\leq T$ we say the option is triggered (or exercised) at
trigger time $\tau$.
\end{defn}
We have $W_{t}=0\text{ for all }t\geq\tau$. We define the respective non-decreasing sequences of stopping times
$\{\tau_{t}\}_{t\in[0,T]}$ and $\{\bar{\tau}_{t}\}_{t\in[0,T]}$
as $\tau_{t}:=\tau\vee t$
 and $\bar{\tau}_{t}:=\tau_{t}\wedge T,$
for all $t\in[0,T]$. \revised{Intuitively, if trigger occurred prior to $t$ then $\tau_{t}$ discards the information on the exact timing of the event.  
Further, since $\tau$ and $\tau_{t}$ could be infinite and the contract expires at time $T$, $\bar{\tau}_{t}$ is the minimum of $\tau_{t}$ and $T$.}

For $0\leq s\leq t\leq T$ and $A\subset[0,T]$,
by the Markov property of $W_{t}$ we have \begin{equation}
\Q(\bar{\tau}_{t}\in A|\mc{F}_{s})=F(s,t,A,W_{s}),\label{eq:Markov prop Qtau}\end{equation}
$\mb{\Q}$-a.s. where \[
F(s,t,A,w):=\mb{\Q}(\bar{\tau}_{t}^{w,s}\in A)\]
and \[
\bar{\tau}_{t}^{w,s}=\inf\{u\geq t;W_{u}^{w,s}=0\}\wedge T.\]

Let $U=\{U_{t};0\leq t\leq T\}$ denote the rider value process. \label{At-time-,}At time $\bartau0$
the rider guarantee entitles the policyholder to receive a term certain
annuity for $T-\bartau0$ years with an annual payment of $G$ and no uncertainty remains.
Fee revenue is received up to time $\bartau0$.

This motivates the following definition for $U$ which appears
in \citep{RefWorks:32}. For $t\in[0,T]$ we define\begin{gather}
U_{t}:=E_{\Q}\left[e^{-r(\bar{\tau}_{t}-t)}G\ann{\ban}{T-\bar{\tau}_{t}}-\int_{t}^{\bar{\tau}_{t}}e^{-r(s-t)}\alpha W_{s}^{P,0}ds|\mc{F}_{t}\right].\label{eq:Uicts}\end{gather}
The value $U_{t}$ represents the remaining risk exposure to the insurer and is the risk-neutral expected discounted difference
between future rider payouts and future fee revenues.
By the Markov property for $\{W_{t}\}$ and \prettyref{eq:Markov prop Qtau}
we have $U_{t}=u(t,W_{t}),\ \Q$-a.s. for all $t\in[0,T]$, where $u:[0,T]\times\mb{R}_{+}\mapsto\mb{R}$
is given by \begin{equation}
u(t,x)=E_{\Q}\Bigg[e^{-r(\bar{\tau}_{t}^{x,t}-t)}G\ann{\ban}{T-\bar{\tau}_{t}^{x,t}}-\int_{t}^{\bar{\tau}_{t}^{x,t}}e^{-r(s-t)}\alpha W_{s}^{x,t}ds\Biggr].\label{eq:UMarkov}\end{equation}
The boundary condition $u(t,0)=G\ann{\ban}{T-t}$ is implied
in the above formulation. 

\subsection{\label{sub:Analytic-Results}Analytic Results}

With the goal of arriving at an existence and uniqueness result for
$\alpha^{\star}$, we first state two basic properties satisfied by
$V_{0}$, the proofs of which can be found in the appendix.
\begin{lem}
\label{lem:Monot}$V_{0}$, defined by \prettyref{eq:V0}, is a strictly
decreasing and continuous function of $\alpha$ for $\alpha\geq0$. \end{lem}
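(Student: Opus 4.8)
The plan is to prove both the strict monotonicity and the continuity of $\alpha \mapsto V_0(P,\alpha,g)$ by working from the representation $V_0 = G\,\ann{\ban}{T} + e^{-rT} E_{\Q}[W_T^{P,0}]$ in \prettyref{eq:V0}. Since the term-certain annuity component $G\,\ann{\ban}{T}$ does not depend on $\alpha$, everything reduces to studying $\phi(\alpha) := E_{\Q}[W_T^{P,0}]$ as a function of $\alpha$, where $W_T^{P,0}$ is given explicitly by \prettyref{eq:W}. The key structural observation is that inside the $(\cdot)^+$ in \prettyref{eq:W}, every occurrence of $\alpha$ enters through a factor of the form $e^{-\alpha(t-s)}$ with $t-s \ge 0$, so the integrand $x e^{(r-\alpha-\frac12\sigma^2)(t-u)+\sigma(B_t-B_u)} - G\int_u^t e^{(r-\alpha-\frac12\sigma^2)(t-s)+\sigma(B_t-B_s)}\,ds$ is, pathwise, a decreasing function of $\alpha$ — but note the subtracted integral also shrinks with $\alpha$, so monotonicity of the bracket is not completely obvious and must be argued carefully.

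The cleaner route, which I would take, is to use the SDE/pathwise comparison at the level of $W$ itself rather than the closed form. Fix two fee rates $\alpha_1 < \alpha_2$ and let $W^{(1)}, W^{(2)}$ be the corresponding account-value processes driven by the same Brownian path, both started at $P$. From \prettyref{eq:dWt}, before either process hits zero, $d(W^{(1)}_t - W^{(2)}_t) = (r-\alpha_1)W^{(1)}_t - (r-\alpha_2)W^{(2)}_t)\,dt + \sigma(W^{(1)}_t - W^{(2)}_t)\,dB_t$; writing the drift as $(r-\alpha_1)(W^{(1)}_t - W^{(2)}_t) + (\alpha_2-\alpha_1)W^{(2)}_t$ and using a standard pathwise comparison argument (or an integrating-factor/variation-of-parameters representation as in \prettyref{eq:W}), one gets $W^{(1)}_t \ge W^{(2)}_t$ for all $t$, with the inequality strict on a set of positive $\Q$-measure — for instance, on the event $\{\tau^{(1)} > T\}$ that the lower-fee account never hits zero, which has positive probability since $W$ has a lognormal-type upper envelope. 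Hence $E_{\Q}[W_T^{(1)}] > E_{\Q}[W_T^{(2)}]$, giving strict monotonicity; I would make sure to justify that this positive-probability strict-inequality event is nonempty, which is where a small amount of care is needed (bounding $W_t$ below by $P Z_t - G\int_0^t Z_t/Z_s\,ds$ and noting this can stay positive).

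For continuity, the plan is dominated convergence: from \prettyref{eq:W1}, $0 \le W_T^{P,0} = \max(0,\,PZ_T - G\int_0^T Z_T/Z_s\,ds) \le P Z_T = P e^{(r-\alpha-\frac12\sigma^2)T + \sigma B_T}$, and for $\alpha$ ranging over any compact interval $[0,M]$ this is dominated by the integrable random variable $P e^{(r+\frac12\sigma^2)T}e^{|\sigma B_T|}$ (using $e^{\sigma B_T} \le e^{|\sigma B_T|}$ and that $e^{|\sigma B_T|}$ has finite expectation). Since $\alpha \mapsto W_T^{P,0}$ is continuous $\Q$-a.s. (it is built from continuous operations — exponentials, an integral over $s$, subtraction, and $\max(0,\cdot)$ — applied to the $\alpha$-dependent exponents), dominated convergence yields continuity of $\phi$ and hence of $V_0$. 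The main obstacle I anticipate is not the continuity part, which is routine, but pinning down the \emph{strict} part of the monotonicity cleanly: one must exhibit an explicit positive-probability event on which $W_T^{(1)} > W_T^{(2)}$, and the natural candidate is $\{$the lower-fee account value stays strictly positive up to $T\}$, whose positivity I would verify from the lognormal lower bound on $Z$ and the boundedness of the withdrawal integral; once that event is in hand the strict inequality of expectations is immediate.
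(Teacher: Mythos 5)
Your overall architecture matches the paper's: a pathwise/SDE comparison for the (weak) monotonicity, the explicit representation \prettyref{eq:W} to upgrade to strict inequality on a positive-probability survival event, and dominated convergence (with the bound $W_T \le PZ_T$) for continuity in $\alpha$. The continuity argument is fine and, if anything, slightly cleaner than the paper's split into one-sided limits via monotone plus dominated convergence. The choice of event $\{W_T(\alpha_1)>0\}$ rather than the paper's $\{W_T(\alpha_2)>0\}$ for the strict comparison is also legitimate, since on either event the closed form gives $W_T(\alpha_2)\le e^{-(\alpha_2-\alpha_1)T}W_T(\alpha_1)<W_T(\alpha_1)$.

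The genuine gap is exactly the step you flag but then dispatch too quickly: the claim that $\Q\bigl(W_T(\alpha)>0\bigr)>0$ for \emph{every} fee rate $\alpha\ge 0$ (you need it at arbitrary $\alpha_1$, since strict decrease must hold at every point). Your proposed justification --- ``the lognormal lower bound on $Z$ and the boundedness of the withdrawal integral'' --- does not work: the withdrawal integral $G\int_0^T Z_T/Z_s\,ds$ is not bounded pathwise, and the event in question is $\bigl\{\int_0^T e^{-(r-\alpha-\frac12\sigma^2)s-\sigma B_s}\,ds < P/G\bigr\}$, whose threshold effectively becomes $Te^{(r-\alpha-\frac12\sigma^2)T}$, arbitrarily small when $\alpha$ is large. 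Crude bounds on the event that $B$ stays above a fixed low level cannot make the integral that small; one needs paths on which the Brownian motion rises quickly and stays high so that $e^{-\sigma B_s}$ is tiny over most of $[0,T]$. This is precisely the content of the paper's \prettyref{lem:RuinProp} (that $\Q(\int_0^T e^{-aB_s}\,ds<k)>0$ for all $T,a,k>0$, proved by conditioning on $\{B_s>-1 \text{ on } [0,u]\}$ and $\{B_s>M \text{ on } [u,T]\}$ with a Brownian-bridge estimate) and of \prettyref{lem: probruin}, which converts it into $\Q(W_T^{P,0}>0)>0$. Your proof is incomplete without an argument of this type (an alternative would be a full-support-of-Wiener-measure argument, noting the event is open and nonempty in $C[0,T]$, but some such argument must be supplied); once that lemma is in hand, the rest of your proposal goes through as in the paper.
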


\begin{thm}
\label{thm:UniqueAlpha}
Under \prettyref{ass:Cts_nolapses} there exists a unique $\alpha^{\star}$
satisfying\[
V_{0}(P,\alpha^{\star},g)=P.\]
\end{thm}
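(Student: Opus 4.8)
The plan is to derive the claim from \prettyref{lem:Monot} by an intermediate value argument. Since \prettyref{lem:Monot} already gives that $\alpha\mapsto V_0(P,\alpha,g)$ is continuous and strictly decreasing on $[0,\infty)$, it is enough to establish the two endpoint estimates $V_0(P,0,g)\ge P$ and $\lim_{\alpha\to\infty}V_0(P,\alpha,g)<P$. Granting these, the intermediate value theorem produces some $\faira\in[0,\infty)$ with $V_0(P,\faira,g)=P$ (the case $V_0(P,0,g)=P$, should it occur, simply gives $\faira=0$), and strict monotonicity forces this $\faira$ to be unique.

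For the lower estimate I would work directly with the zero-fee account value. Setting $\alpha=0$ in \prettyref{eq:dWt}, $W_t$ solves $dW_t=rW_t\,dt+\sigma W_t\,dB_t-G\,dt$ on $[0,\tau)$ while $W_t=0$ on $[\tau,T]$, so It\^{o}'s formula shows that
\[
N_t:=e^{-rt}W_t+G\int_0^{t\wedge\tau}e^{-rs}\,ds
\]
has vanishing drift. Because $0\le W_t\le PZ_t$ by \prettyref{eq:W1}, one has $E_{\Q}\!\int_0^T(e^{-rs}\sigma W_s)^2\,ds\le\sigma^2P^2\!\int_0^T E_{\Q}[Z_s^2]\,ds<\infty$, so $N$ is a genuine $\Q$-martingale. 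Evaluating $E_{\Q}[N_T]=N_0=P$ gives $e^{-rT}E_{\Q}[W_T]=P-G\,E_{\Q}[\ann{\ban}{T\wedge\tau}]\ge P-G\ann{\ban}{T}$ since $T\wedge\tau\le T$, and substituting into \prettyref{eq:V0} yields $V_0(P,0,g)=G\ann{\ban}{T}+e^{-rT}E_{\Q}[W_T]\ge P$. Intuitively, with no fee the discounted account value together with the withdrawals actually funded by the account replicate the premium exactly, and the policyholder additionally collects the post-ruin withdrawals paid by the guarantee, which can only add value.

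For the upper estimate, the pathwise bound $W_t\le PZ_t$ from \prettyref{eq:W1} together with $E_{\Q}[Z_T]=e^{(r-\alpha)T}$ gives $e^{-rT}E_{\Q}[W_T]\le Pe^{-\alpha T}\to0$ as $\alpha\to\infty$, whence $\lim_{\alpha\to\infty}V_0(P,\alpha,g)=G\ann{\ban}{T}$. Since $G=P/T$, this limit equals $P(1-e^{-rT})/(rT)$, which is strictly less than $P$ precisely because $1-e^{-rT}<rT$ for every $rT>0$; this is exactly where the hypothesis $r>0$ of \prettyref{ass:Cts_nolapses} is used. Combining, $\lim_{\alpha\to\infty}V_0(P,\alpha,g)<P\le V_0(P,0,g)$, which is what the intermediate value step requires.

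I expect the only genuinely delicate point to be upgrading $N$ from a local to a true martingale at $\alpha=0$; the needed integrability bound is routine given $W_s\le PZ_s$ and the lognormal moments of $Z$, but it should be stated rather than skipped. Everything else is the standard continuity-plus-strict-monotonicity packaging once \prettyref{lem:Monot} is in hand, and it is worth emphasising the role of $r>0$: without it, the term-certain annuity $G\ann{\ban}{T}$ can equal $P$, in which case no positive fee is fair and the notion of a fair fee degenerates.
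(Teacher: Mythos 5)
Your argument is correct and shares the paper's overall skeleton (continuity and strict monotonicity from \prettyref{lem:Monot}, two endpoint estimates, then an intermediate value argument, with uniqueness from strict monotonicity), but both endpoint estimates are obtained by a genuinely different route. For $\alpha=0$ the paper drops the positive part in \prettyref{eq:W}, so that $W_T\geq PZ_T-G\int_0^T (Z_T/Z_s)\,ds$, and evaluates the expectation directly using $E_{\Q}[e^{-0.5\sigma^{2}t+\sigma B_{t}}]=1$ and Fubini, the annuity terms cancelling to give $V_0(P,0,g)\geq P$; you instead exhibit $e^{-rt}W_t+G\int_0^{t\wedge\tau}e^{-rs}ds$ as a true martingale — essentially the same device the paper itself uses later in the proof of \prettyref{thm:VULWCts} — and read off the identity $e^{-rT}E_{\Q}[W_T]=P-G\,E_{\Q}[\ann{\ban}{T\wedge\tau}]\geq P-G\ann{\ban}{T}$. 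Your route is in fact slightly sharper, since it quantifies the value added by the guarantee as $G\,E_{\Q}[\ann{\ban}{T}-\ann{\ban}{T\wedge\tau}]$ rather than only yielding the inequality, and you correctly isolate the one delicate step (upgrading the local martingale to a true one), which your bound $W_s\leq PZ_s$ from \prettyref{eq:W1} and the lognormal moments of $Z$ settle. For $\alpha\to\infty$ the paper shows $W_T(\alpha)\to 0$ a.s.\ and invokes dominated convergence, whereas you use the pathwise bound $W_T\leq PZ_T$ together with $E_{\Q}[Z_T]=e^{(r-\alpha)T}$ to get $e^{-rT}E_{\Q}[W_T]\leq Pe^{-\alpha T}$ directly, which is a bit more elementary and gives the same limit $G\ann{\ban}{T}<P$; your use of $r>0$ here, and the treatment of the boundary case $\faira=0$, match the paper's.
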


\begin{remUnnum}
\prettyref{ass:Cts_nolapses} imposed that $r>0$.  Otherwise, 
the optimal solution $\alpha^{\star}$ must satisfy $W_{T}(\alpha^{\star})=0$,
 $\Q$ a.s.\ and, by \prettyref{lem: probruin} of the Appendix, no solution exists.
\end{remUnnum}
The next result unifies the insured and insurer perspectives and was
first presented in \citep{RefWorks:32} for the case $t=0$ with
 stochastic interest rates. In \prettyref{sec:Extending-modelSurrenders} we extend
this result to the more general case of surrenders and a complete
proof will be presented at that time. 
\begin{prop}
\label{pro:U=00003DU+W}For any $\alpha\geq0$ \[
v(t,w)=u(t,w)+w\]
 for all $t\in[0,T]$ and $w>0$.
That is, $V_{t}=U_{t}+W_{t}$, $\Q$ a.s. \end{prop}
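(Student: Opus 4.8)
The plan is to show $V_t = U_t + W_t$ by writing out the three value processes explicitly and verifying the identity as an equality of conditional expectations. Starting from the definitions \prettyref{eq:Vi} and \prettyref{eq:Uicts}, it suffices to prove
\[
G\ann{\ban}{T-t} + e^{-r(T-t)}E_{\Q}[W_T^{P,0}\mid\mc{F}_t]
= E_{\Q}\!\left[e^{-r(\bar{\tau}_t-t)}G\ann{\ban}{T-\bar{\tau}_t} - \int_t^{\bar{\tau}_t} e^{-r(s-t)}\alpha W_s^{P,0}\,ds \,\Big|\, \mc{F}_t\right] + W_t,
\]
for $W_t > 0$. Since everything is conditional on $\mc{F}_t$ and all processes are Markovian in $W$, I would reduce to the function-level identity $v(t,w) = u(t,w) + w$ and work under the shifted dynamics $W_\cdot^{w,t}$; to lighten notation I would even take $t=0$, $w = W_0$ (the general case being identical after relabelling).

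The key computation is to evaluate $e^{-rT}E_{\Q}[W_T]$ by splitting on the event $\{\bar\tau_0 < T\} = \{\tau \le T\}$ versus $\{\bar\tau_0 = T\} = \{\tau > T\}$. On $\{\tau > T\}$, $W_T$ is already the terminal account value; on $\{\tau \le T\}$, $W_T = 0$. The cleaner route, though, is to apply the discounted-dynamics/Itô approach to the stopped process: from the SDE \prettyref{eq:dWt}, $d(e^{-rt}W_t) = -\alpha e^{-rt}W_t\,dt + \sigma e^{-rt}W_t\,dB_t - G e^{-rt}\,dt$ on $[0,\tau]$, and $W$ is absorbed at $0$. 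Integrating from $0$ to $\bar\tau_0$ and taking $\Q$-expectations (the stochastic integral is a true martingale since $W$ is bounded on $[0,T]$ — it is dominated by the geometric Brownian motion part), I get
\[
E_{\Q}[e^{-r\bar\tau_0}W_{\bar\tau_0}] = W_0 - \alpha E_{\Q}\!\left[\int_0^{\bar\tau_0} e^{-rs}W_s\,ds\right] - G\,E_{\Q}\!\left[\int_0^{\bar\tau_0} e^{-rs}\,ds\right].
\]
Now use $W_{\bar\tau_0} = W_T\mathbf{1}_{\{\tau > T\}}$ (since $W_{\bar\tau_0}=0$ when $\tau\le T$, and $\bar\tau_0 = T$ when $\tau > T$), so the left side is $E_{\Q}[e^{-rT}W_T]$; and $\int_0^{\bar\tau_0}e^{-rs}\,ds = \ann{\ban}{\bar\tau_0} = \ann{\ban}{T} - e^{-r\bar\tau_0}\ann{\ban}{T-\bar\tau_0}$ by the standard annuity splitting identity (valid because $\bar\tau_0 \le T$). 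Substituting, $e^{-rT}E_{\Q}[W_T] = W_0 - \alpha E_{\Q}[\int_0^{\bar\tau_0}e^{-rs}W_s\,ds] - G\ann{\ban}{T} + G\,E_{\Q}[e^{-r\bar\tau_0}\ann{\ban}{T-\bar\tau_0}]$. Adding $G\ann{\ban}{T}$ to both sides yields exactly $v(0,W_0) = W_0 + u(0,W_0)$, which is the claim.

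The main obstacle — really the only delicate point — is justifying that the local martingale $\int_0^{t\wedge\bar\tau_0}\sigma e^{-rs}W_s\,dB_s$ is a genuine martingale so that its expectation vanishes at $\bar\tau_0$. This follows because on $[0,T]$ the account value satisfies $0 \le W_s \le P Z_s = P e^{(r-\alpha-\sigma^2/2)s + \sigma B_s}$ by \prettyref{eq:W1}, and $\sup_{s\le T} E_{\Q}[Z_s^2] < \infty$, so the integrand is in $L^2(\Omega\times[0,T])$; Doob's inequality then gives uniform integrability of the stopped martingale. A secondary bookkeeping point is checking the annuity identity $\ann{\ban}{T} = \ann{\ban}{u} + e^{-ru}\ann{\ban}{T-u}$ for $0 \le u \le T$, which is immediate from $\ann{\ban}{s} = (1-e^{-rs})/r$, and confirming that it continues to hold when $\bar\tau_0 = T$ (both sides reduce correctly since $\ann{\ban}{0}=0$). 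Once these are in place the identity is a direct consequence of the stopped Itô formula, and I would note that the same argument will carry over verbatim — with $\bar\tau_0$ replaced by the appropriate surrender-adjusted stopping time — to the extended model of \prettyref{sec:Extending-modelSurrenders}.
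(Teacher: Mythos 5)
Your argument is correct and is essentially the paper's own: the paper proves this proposition as the no-lapse special case of \prettyref{thm:VULWCts}, whose proof uses exactly your steps — the product rule on the discounted account value, stopping at the trigger time, the annuity splitting identity, and an $L^2$ domination of $W$ by $PZ$ to make the stochastic integral a true martingale. The only cosmetic slip is the phrase ``$W$ is bounded on $[0,T]$,'' which your subsequent $L^2$ bound correctly replaces.
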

\begin{remUnnum}
By definition of the fair fee rate $\alpha^{\star}$ we have $U_{0}(P,\alpha^{\star},g)=0$
as a result of \prettyref{pro:U=00003DU+W}. From \prettyref{lem:Monot}
we have $V_{0}<P$ and $U_{0}<0$ for all $\alpha>\alpha^{\star}$.
Likewise, $V_{0}>P$ and $U_{0}>0$ for all $\alpha<\alpha^{\star}$.
For any $t$, we say the contract is in the money (ITM) if $V_{t}>W_{t}$
and $U_{t}>0$. Similarly, it is out of the money (OTM) if $V_{t}<W_{t}$
and $U_{t}<0$. It is at the money (ATM) if $V_{t}=W_{t}$ and $U_{t}=0$.
\end{remUnnum}

\begin{remUnnum}
In \prettyref{sub:Background} we briefly discussed the fund drag
created by an increase in the rider fee rate. The strictly decreasing
property of $V_{0}$ and \prettyref{pro:U=00003DU+W} imply that $U_{0}=V_{0}-P$
is a strictly decreasing function of $\alpha$. Thus any increase
in expected revenue from an increase in $\alpha$ will always exceed
any increase in expected rider payouts.
\end{remUnnum}

\section{Optimal Stopping and Surrenders}\label{sec:Extending-modelSurrenders}

We next extend the model to allow the policyholder to surrender, or lapse, the policy prior to time $T$. Although a policyholder may surrender for a number of reasons, for instance due to an emergency cash crisis, rational behaviour in an economic sense is assumed. Early surrenders occur only if the proceeds exceed the risk-neutral value of keeping the contract in-force. 

Upon surrender the policyholder closes out the contract by withdrawing
the current account value. The cash proceeds are reduced by a surrender
charge on any amount exceeding the annual maximal permitted withdrawal
amount specified in the rider contract. Typically, contract provisions
include contingent deferred sales charge (CDSC) schedules specifying
surrender charges as a function of the duration since issue year.
An example is an 8-year schedule with a charge of 8\% in year 1 and
decreasing by 1\% each year, followed by no surrender charges after
year 9. 

To describe the CDSC schedule let $k:[0,T]\mapsto[0,1]$ be a deterministic
non-increasing piecewise constant RCLL (right continuous with left
limits) function with possible discontinuities at integer time values.
Our results hold for any non-increasing function taking
values in $[0,1]$ but we select a function that is an accurate representation
of CDSC schedules for products sold in the insurance marketplace. For a policy issued at time zero, $k_{s}$ is the surrender charge
applicable at time $s$. The no-lapse model is easily recovered by
setting $k_{s}=1$ for all $s\in[0,T)$ and $k_{T}=0$ in which case
the opportunity to surrender early is worthless. Similarly, we could
model a contract which only allows surrenders once a specific duration
$t_{1}$ is reached, by setting $k_{s}=1$ for $s\in[0,t_{1})$ and
$k_{s}<1$ for $s\in[t_{1},T]$. However the more common case has
$k_{s}<1$ for all $s\in[0,T]$. Further, we assume $k_{T}=0$ to
allow comparison to the no-lapse model where the contract terminates
at time $T$ with no surrender charges.

We assume the proceeds from surrender charges are invested in the
hedging portfolio. Without surrender charges, it would be optimal
to surrender the contract when it is OTM and avoid paying future annual rider fees.
Surrender charges act as a disincentive and may make it too
costly to surrender, or if surrender is still optimal, they provide the insurer with compensation for
the loss of future fees.  For GMDB riders, \citep{RefWorks:35} argue {}``when
option premiums are paid by instalments - even in the presence of
complete mortality and financial markets - the ability to `lapse'
de facto creates an incomplete market''. In our view, the surrender charges complete the market and make the guarantees hedgeable.  In fact, in the case of a binomial model for the product without mortality we prove in \citep{GMWB-binomial} that the product can be perfectly hedged with fee income and surrender charges.  We also explore pricing, hedging errors, and diversification limits in a binomial model with mortality risk in \citep{GMWB-mortality}.

The pricing task, viewed from the policyholder's perspective, becomes an optimal stopping problem. The contract
value process for the VA plus GMWB is\begin{equation}
V_{t}:=\sup_{\eta\in\mb{L}_{t}}V_{t}^{\eta},\label{eq:VictsLapse}\end{equation}
where \begin{equation}
V_{t}^{\eta}=E_{\Q}\left[G\ann{\ban}{\eta-t}+e^{-r(\eta-t)}W_{\eta}(1-k_{\eta})|\mc{F}_{t}\right]\label{eq:VctslapseSpecificeta}\end{equation}
and $\mb{L}_{t}$ is the set of $\mb{F}-$adapted stopping
times taking values in $[t,T]$.
\revised{Observe that }it is sufficient to consider the set $\mb{L}_{t,\bar{\tau}_{t}}\subset\mb{L}_{t}$,
where $\mb{L}_{t,\bar{\tau}_{t}}$ contains all $\mb{F}-$adapted
stopping times taking values in $[t,\bar{\tau}_{t})\cup\{T\}$, and
$\bar{\tau}_{t}$ is the trigger time assuming no lapses.
That is, if the rider is triggered without prior surrender then \revised{due to product design} the
future guaranteed payments can not be immediately withdrawn and optimal
surrender will naturally occur at maturity time $T$.

By the Markov property of $W_{t}$ we have $V_{t}=v(t,W_{t})\ \Q-$a.s. for all $t\in[0,T]$, where $v:[0,T]\times\mb{R}_{+}\mapsto\mb{R}_{+}$
is given by\[
v(t,x)=\sup_{\eta\in\mb{L}_{t,\bar{\tau}_{t}^{x,t}}}E_{\Q}\left[G\ann{\ban}{\eta-t}+e^{-r(\eta-t)}W_{\eta}^{x,t}(1-k_{\eta})\right].\]
{}Suppose that $k_{0}=0$ and let $\hat{\alpha}:=\inf\{\alpha;V_{0}(P,\alpha,g)=P\}$.
Then for all $\alpha\geq\hat{\alpha}$ we have $V_{0}(P,\alpha,g)=P$,
but there will be no buyers as it is optimal to surrender immediately.
Insurers will not charge $\alpha<\hat{\alpha}$ because $V_{0}(P,\alpha,g)>P$.
When lapses are permitted but no surrender charges are imposed, there
is no unique $\faira$ and the product is not marketable. To preclude
this trivial case, we impose the condition that $k_{0}>0$. 

The insurer's value process for the rider, analogous to \prettyref{eq:Uicts}, is given by \begin{equation}
U_{t}:=\sup_{\eta\in\mb{L}_{t,\bar{\tau}_{t}}}U_{t}^{\eta},\label{eq:ULapse}\end{equation}
where \[
U_{t}^{\eta}=E_{\Q}\left[Ge^{-r(\bar{\tau}_{t}-t)}\mathbf{1}_{\{\eta=T\}}\ann{\ban}{T-\bar{\tau}_{t}}-\int_{t}^{\eta}\alpha e^{-r(s-t)}W_{s}ds-e^{-r(\eta-t)}W_{\eta}k_{\eta}|\mc{F}_{t}\right].\]

We introduce a value process for the option to surrender and denote
it by $L=\{L_{t};0\leq t\leq T\}$. Let $U_{t}^{NL}$ be the rider
value given by \prettyref{eq:Uicts} in the no-lapse model. Then we
define $L_{t}:=U_{t}-U_{t}^{NL}\geq0$ for all $t\in[0,T]$. It follows that \begin{equation}
L_{t}=\sup_{\eta\in\mb{L}_{t,\bar{\tau}_{t}}}L_{t}^{\eta},\label{eq:CtsLapse L}\end{equation}
where \[
L_{t}^{\eta}=E_{\Q}\left[\int_{\eta}^{T}\alpha e^{-r(s-t)}W_{s}ds-Ge^{-r(\bar{\tau}_{t}-t)}\mathbf{1}_{\{\eta<\bar{\tau}_{t}\}}\ann{\ban}{T-\bar{\tau}_{t}}-k_{\eta}W_{\eta}e^{-r(\eta-t)}|\mc{F}_{t}\right].\]
This formulation is quite intuitive. For a fixed surrender strategy,
the surrender benefit is the expected value of the fees
avoided by early surrender, less any future benefit payments missed
if surrender occurs prior to a trigger time, and less the surrender
charge paid at the time of surrender. It is natural that the insured
seeks to optimize this surrender benefit. The Markovian representations
for $U$ and $L$ are obvious and are omitted.

\prettyref{pro:U=00003DU+W} formalized the precise relationship between
$\{U_{t}\}$ and $\{V_{t}\}$ in the no-lapse model. The next theorem
generalizes that relationship to the current model and is an extension
of a result proved by \citep{RefWorks:32} in the case where no lapses are permitted.
\citep{RefWorks:32} considers a model with stochastic interest rates.  In contrast, we suppose interest rates are constant but generalize \citep[Eqn. (2.16)]{RefWorks:32} to all times $t\in[0,T]$ and allow for lapses.
\begin{thm}
\label{thm:VULWCts}Let $V_{t},U_{t}^{NL},L_{t},U_{t}$ be defined
by \prettyref{eq:VictsLapse}, \prettyref{eq:Uicts}, \prettyref{eq:CtsLapse L}
and \prettyref{eq:ULapse} respectively. Then for all $\alpha\geq0$
and for all $t\in[0,T]$, we have\begin{align}
V_{t} & =W_{t}+U_{t}^{NL}+L_{t},\quad\Q\ \text{a.s.},\label{eq:V++W+L}\end{align}
or, equivalently, \begin{equation}
V_{t}=W_{t}+U_{t},\quad\Q\ \text{a.s.}\label{eq:V+W+Ulapses}\end{equation}
\end{thm}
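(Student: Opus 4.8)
The plan is to prove the equivalent identity \prettyref{eq:V+W+Ulapses}, namely $V_t = W_t + U_t$ for all $t$ and all $\alpha\ge0$; the decomposition \prettyref{eq:V++W+L} then follows immediately from the definition $L_t := U_t - U_t^{NL}$. The key observation is that both $V_t$ and $U_t$ are suprema over the \emph{same} index set $\mb{L}_{t,\bar\tau_t}$ of stopping strategies, so it suffices to show a pathwise (strategy-by-strategy) identity $V_t^\eta = W_t + U_t^\eta$ for every fixed $\eta\in\mb{L}_{t,\bar\tau_t}$, and then take the supremum of both sides. This reduces the theorem to an algebraic/conditional-expectation manipulation with no optimization left in it.

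First I would write out $V_t^\eta$ from \prettyref{eq:VctslapseSpecificeta} and $U_t^\eta$ from the display following \prettyref{eq:ULapse}, and form $V_t^\eta - U_t^\eta$. The surrender-charge terms combine as $e^{-r(\eta-t)}W_\eta(1-k_\eta) + e^{-r(\eta-t)}W_\eta k_\eta = e^{-r(\eta-t)}W_\eta$, so the $k_\eta$ dependence cancels cleanly. What remains to show is
\begin{equation}
E_{\Q}\!\left[G\,\ann{\ban}{\eta-t} + e^{-r(\eta-t)}W_\eta + \int_t^\eta \alpha e^{-r(s-t)}W_s\,ds - G e^{-r(\bar\tau_t-t)}\mathbf{1}_{\{\eta=T\}}\ann{\ban}{T-\bar\tau_t}\,\Big|\,\mc{F}_t\right] = W_t.\label{eq:proofgoal}
\end{equation}
Using $G\,\ann{\ban}{\eta-t} = E_{\Q}[\int_t^\eta G e^{-r(s-t)}ds\,|\,\mc{F}_t]$, the term $G e^{-r(\bar\tau_t-t)}\mathbf{1}_{\{\eta=T\}}\ann{\ban}{T-\bar\tau_t}$ can be rewritten as $E_{\Q}[\int_{\bar\tau_t}^{T} G e^{-r(s-t)}ds\,\mathbf{1}_{\{\eta=T\}}\,|\,\mc{F}_{\bar\tau_t}]$; since on $\mb{L}_{t,\bar\tau_t}$ one has $\eta < \bar\tau_t$ or $\eta=T$, and $W_s=0$ for $s\ge\bar\tau_t$, the integrals $\int_t^\eta$ of the $G$-annuity and the fee term can be consolidated so that \prettyref{eq:proofgoal} becomes a statement about the discounted account value.

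The heart of the argument is then a \emph{discounted dynamics} identity for $\{W_t\}$: from the SDE \prettyref{eq:dWt}, applying It\^o's formula to $e^{-rt}W_t$ up to the stopping time $\eta\wedge\bar\tau_t$ gives
\begin{equation}
e^{-r\eta}W_\eta = e^{-rt}W_t - \int_t^\eta \alpha e^{-rs}W_s\,ds - \int_t^\eta G e^{-rs}\,ds + \int_t^\eta \sigma e^{-rs}W_s\,dB_s,\label{eq:discdyn}
\end{equation}
valid because $W$ is absorbed at $0$ so the dynamics hold on $[t,\eta]$ for $\eta\in\mb{L}_{t,\bar\tau_t}$ (and on $\{\eta=T\}$ one must also account for the $G$-payments received between $\bar\tau_t$ and $T$, which is exactly the trigger term in \prettyref{eq:proofgoal}). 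Multiplying by $e^{rt}$, rearranging, and taking $E_{\Q}[\,\cdot\,|\mc{F}_t]$ — the stochastic integral is a true martingale by the integrability of $W$ on $[0,T]$, which should be recorded as a lemma or cited from the appendix — yields precisely \prettyref{eq:proofgoal}. Taking the supremum over $\eta\in\mb{L}_{t,\bar\tau_t}$ on both sides of $V_t^\eta = W_t + U_t^\eta$, and noting $W_t$ is $\mc{F}_t$-measurable and hence pulls out of the sup, gives $V_t = W_t + U_t$; restricting to $w>0$ and using the Markov property gives the pointwise version $v(t,w) = u(t,w) + w$ as in \prettyref{pro:U=00003DU+W}.

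The main obstacle I anticipate is \textbf{bookkeeping at and after the trigger time} on the event $\{\eta = T\}$: once $W$ hits zero the account-value integral contributes nothing, but the contract keeps paying $G$ until $T$, so the It\^o argument on $[t,T]$ must be split at $\bar\tau_t$ and the post-trigger $G$-annuity identified with the $\mathbf{1}_{\{\eta=T\}}$ term in $U_t^\eta$. One must also verify that using $\mb{L}_{t,\bar\tau_t}$ rather than the full $\mb{L}_t$ in \prettyref{eq:VictsLapse} is legitimate (the excerpt asserts it, so I would invoke that remark), and confirm the uniform integrability / martingale property of the $dB$-integral so that conditional expectations may be interchanged with the stochastic integral — a routine but necessary technical point.
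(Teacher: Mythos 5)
Your proposal takes essentially the same route as the paper's proof: establish the per-strategy identity $V_t^{\eta}=W_t+U_t^{\eta}$ by applying the product rule to the discounted account value over $[t,\eta\wedge\bar{\tau}_t]$, split the annuity at the trigger time to produce the $Ge^{-r(\bar{\tau}_t-t)}\mathbf{1}_{\{\eta=T\}}\ann{\ban}{T-\bar{\tau}_t}$ term, verify the stochastic integral is a true martingale so its conditional expectation vanishes, and then take the supremum over $\eta\in\mb{L}_{t,\bar{\tau}_t}$, with the $L_t$ decomposition following from $L_t:=U_t-U_t^{NL}$. The only correction needed is the one you flag yourself: the discounted-dynamics display should carry $\int_t^{\eta\wedge\bar{\tau}_t}Ge^{-rs}\,ds$ rather than $\int_t^{\eta}Ge^{-rs}\,ds$, which is precisely what the paper's identity $G\ann{\ban}{\eta-t}=G\ann{\ban}{\eta\wedge\bar{\tau}_t-t}+Ge^{-r(\bar{\tau}_t-t)}\mathbf{1}_{\{\eta=T\}}\ann{\ban}{T-\bar{\tau}_t}$ accounts for.
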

\begin{proof}
Fix $t\in[0,T]$. Applying the product rule to $(e^{-r(s-t)}W_{s})$
for any $s\in[t,T]$,\begin{align*}
d(e^{-r(s-t)}W_{s}) & =-re^{-r(s-t)}W_{s}ds+e^{-r(s-t)}dW_{s}\\
 & =-re^{-r(s-t)}W_{s}ds+e^{-r(s-t)}[(r-\alpha)W_{s}ds+\sigma W_{s}dB_{s}-Gds]\\
 & =-\alpha e^{-r(s-t)}W_{s}ds+e^{-r(s-t)}\sigma W_{s}dB_{s}-e^{-r(s-t)}Gds.\end{align*}

Fix $\eta\in\mb{L}_{t,\bar{\tau}_{t}}$. Integrating over the
interval $[t,\eta\wedge\bar{\tau}_{t}]$, and observing that $W_{s\wedge\bar{\tau}_{t}}=W_{s}$
for all $s\in[t,T]$, we obtain\[
e^{-r(\eta-t)}W_{\eta}-W_{t}=-\int_{t}^{\eta}\alpha W_{s}e^{-r(s-t)}ds-G\ann{\ban}{\eta\wedge\bar{\tau}_{t}-t}+\int_{t}^{\eta}e^{-r(s-t)}\sigma W_{s}dB_{s}.\]
Note that $G\ann{\ban}{\eta-t}=G\ann{\ban}{\eta\wedge\bar{\tau}_{t}-t}+Ge^{-r(\bar{\tau}_{t}-t)}\ann{\ban}{\eta\vee\bar{\tau}_{t}-\bar{\tau}_{t}}.$
Having fixed $\eta\in\mb{L}_{t,\bar{\tau}_{t}}$ we have $\ann{\ban}{\eta\vee\bar{\tau}_{t}-\bar{\tau}_{t}}=\mathbf{1}_{\{\eta=T\}}\ann{\ban}{T-\bar{\tau}_{t}}.$
Then \begin{multline*}
e^{-r(\eta-t)}W_{\eta}+G\ann{\ban}{\eta-t}=\\
W_{t}+Ge^{-r(\bar{\tau}_{t}-t)}\mathbf{1}_{\{\eta=T\}}\ann{\ban}{T-\bar{\tau}_{t}}-\int_{t}^{\eta}\alpha W_{s}e^{-r(s-t)}ds+\int_{t}^{\eta}e^{-r(s-t)}\sigma W_{s}dB_{s}.\end{multline*}
We have\revised{, from equation~(\ref{eq:W1}),} that \[
E_{\Q}\left[\int_{u}^{v}(W_{s})^{2}ds\right]<E_{\Q}\left[\int_{u}^{v}\revised{P^{2}}e^{2(r-\alpha-0.5\sigma^{2})s+2\sigma B_{s}}ds\right]<\infty,\]
thus by a standard result the above It\^{o} integral term is a martingale
(see \citep[Corollary 3.2.6]{RefWorks:2}) and $E_{\Q}[\int_{t}^{\eta}e^{-r(s-t)}\sigma W_{s}dB_{s}|\mc{F}_{t}]=0.$
Subtracting $e^{-r(\eta-t)}W_{\eta}k_{\eta}$ from both sides and
taking conditional expectations w.r.t. $\mc{F}_{t}$, we obtain\[
V_{t}^{\eta}=W_{t}+U_{t}^{\eta}.\]
Since $\eta$ was arbitrary, taking the supremum gives\[
V_{t}=W_{t}+U_{t}.\qedhere\]
\end{proof}
\begin{cor}
For any $\alpha\geq0$,
\begin{equation}L_{t} =\sup_{\eta\in\mb{L}_{t,\bar{\tau}_{t}}}E_{\Q}\left[e^{-r(\eta-t)}W_{\eta}(1-k_{\eta})-e^{-r(T-t)}W_{T}-Ge^{-r\eta}\ann{\ban}{T-\eta}|\mc{F}_{t}\right] \label{eq:LV}.\end{equation}
\end{cor}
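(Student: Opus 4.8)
The plan is to obtain \prettyref{eq:LV} from the per-strategy identity established inside the proof of \prettyref{thm:VULWCts}, namely $V_{t}^{\eta}=W_{t}+U_{t}^{\eta}$ for every $\eta\in\mb{L}_{t,\bar{\tau}_{t}}$, rather than transforming \prettyref{eq:CtsLapse L} by hand. First I would apply that identity at $\eta=T$. In \prettyref{eq:ULapse} the indicator $\mathbf{1}_{\{\eta=T\}}$ becomes $1$, the term $e^{-r(T-t)}W_{T}k_{T}$ vanishes because $k_{T}=0$, and since $W_{s}=0$ for all $s\geq\bar{\tau}_{t}$ (available from \prettyref{eq:W1}) we have $\int_{t}^{T}\alpha e^{-r(s-t)}W_{s}\,ds=\int_{t}^{\bar{\tau}_{t}}\alpha e^{-r(s-t)}W_{s}\,ds$; hence $U_{t}^{T}$ collapses to exactly the no-lapse value $U_{t}^{NL}$ of \prettyref{eq:Uicts}. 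Consequently $V_{t}^{T}=W_{t}+U_{t}^{NL}$.

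Next I would combine this with \prettyref{thm:VULWCts} itself, $V_{t}=W_{t}+U_{t}$, and the definition $L_{t}:=U_{t}-U_{t}^{NL}$, to get $L_{t}=U_{t}-U_{t}^{NL}=(V_{t}-W_{t})-(V_{t}^{T}-W_{t})=V_{t}-V_{t}^{T}$. Since $V_{t}=\sup_{\eta\in\mb{L}_{t,\bar{\tau}_{t}}}V_{t}^{\eta}$ by \prettyref{eq:VictsLapse} and $V_{t}^{T}$ is independent of $\eta$, this rearranges to $L_{t}=\sup_{\eta\in\mb{L}_{t,\bar{\tau}_{t}}}\bigl(V_{t}^{\eta}-V_{t}^{T}\bigr)$ (note that $\eta=T$ gives the summand $0$, consistent with $L_{t}\geq 0$).

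It then remains to compute $V_{t}^{\eta}-V_{t}^{T}$ from \prettyref{eq:VctslapseSpecificeta}. The account-value/surrender-charge terms contribute $e^{-r(\eta-t)}W_{\eta}(1-k_{\eta})-e^{-r(T-t)}W_{T}$, the $\eta=T$ piece simplifying because $k_{T}=0$; the term-certain annuity terms contribute $G\ann{\ban}{\eta-t}-G\ann{\ban}{T-t}$. Using the closed form $\ann{\ban}{s}=(1-e^{-rs})/r$ one verifies the elementary identity $\ann{\ban}{\eta-t}-\ann{\ban}{T-t}=-e^{-r(\eta-t)}\ann{\ban}{T-\eta}$, so the annuity terms contribute $-Ge^{-r(\eta-t)}\ann{\ban}{T-\eta}$. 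Substituting into the supremum yields \prettyref{eq:LV}.

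I do not expect a real obstacle here. The only points requiring care are the bookkeeping identification $U_{t}^{T}=U_{t}^{NL}$ (resting on $k_{T}=0$ and the absorption of $W$ at $\bar{\tau}_{t}$) and the one-line annuity algebra; the substantive work—the product-rule computation, the martingale property of the It\^{o} integral, and the decomposition $V_{t}=W_{t}+U_{t}$—has already been carried out in the proof of \prettyref{thm:VULWCts}, so the corollary is essentially a rearrangement of it.
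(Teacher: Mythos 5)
Your proof is correct and follows essentially the same route as the paper: the paper cites \prettyref{pro:U=00003DU+W} and \prettyref{thm:VULWCts} to conclude $L_{t}=V_{t}-V_{t}^{NL}$ and then reads off \prettyref{eq:LV}, while you reach the identical identity $L_{t}=V_{t}-V_{t}^{T}$ by specializing the per-strategy identity $V_{t}^{\eta}=W_{t}+U_{t}^{\eta}$ to $\eta=T$ (your check that $U_{t}^{T}=U_{t}^{NL}$, using $k_{T}=0$ and the absorption of $W$ at $\bar{\tau}_{t}$, is exactly the content of the cited proposition in this setting) before carrying out the same annuity algebra. One minor observation: your computation yields the discount factor $e^{-r(\eta-t)}$ on the term $G\ann{\ban}{T-\eta}$, so the factor $e^{-r\eta}$ printed in \prettyref{eq:LV} appears to be a typographical slip (the two agree at $t=0$), and your version is the correct one for general $t$.
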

\begin{proof}
\prettyref{pro:U=00003DU+W} and \prettyref{thm:VULWCts} imply $L_t=V_{t}-V_{t}^{NL}$ from which \prettyref{eq:LV} is obtained. 
\end{proof}

\begin{remUnnum}
For $\alpha^{\star}$, such that $V_{0}=P$, we have that $U_{0}(\alpha^{\star})=0$
and $L_{0}(\alpha^{\star})=-U_{0}^{NL}(\alpha^{\star})$. 
Equation \prettyref{eq:LV} is interpreted as the insured selecting the surrender
time to maximize the trade-off between receiving the account value
(less surrender charges) today, rather than at maturity, and foregoing
the rights to any future withdrawals.
\end{remUnnum}

\section{Conclusions}\label{sec:conclusions}

In this paper we have considered the valuation problem of a variable annuity with a guaranteed minimum withdrawal benefit  rider from the perspective of both the policy holder and the insurer.  The focus and main contributions of this paper are the financial aspects of the variable annuity and GMWB rider.   We define the fair rider fee as the rate which equates the risk-neutral expectation of all future benefits to the insured to the initial premium.  The first contribution of the paper is a proof of the existence and uniqueness of the fair fee.  The second contribution of the paper is an extension of the decomposition results of \citep{RefWorks:32} to include lapses.  We decompose the value of the contract into the account value, a component expressing the value of the guarantee in the no-lapse case, and a component expressing the value of the option to lapse.

The valuation perspectives and decompositions are expressed in terms of optimal stopping problems which require a numerical implementation technique to value the contract. \revised{We referred earlier to a subset of articles which focus on various approaches.  Rather than directly applying a numerical approach to the continuous time results presented in this paper, which are of independent interest, we focus on the fundamental theory here and place an emphasis on numerical implementation in our subsequent works.}  In \citep{GMWB-binomial} we construct a complete binomial framework for the contract recovering similar valuation perspectives and decompositions to those presented in this paper, and include hedging results for the GMWB rider as well as numerical implementation techniques. \revised{In particular, in \citep{GMWB-binomial}, we obtain numerical results based on the binomial model which can be viewed as an approximation to the model presented in this paper and which are in accordance with numerical results obtained using Monte-Carlo methods by \citep{Liu-Kolk} and \citep{RefWorks:43}.}     \revised{Further, w}hile we disregard mortality in this paper we extend the binomial modelling framework to include mortality and obtain numerical results demonstrating the limits of hedging and diversification of mortality risk in \citep{GMWB-mortality}.

\vspace{1em}
\noindent\textbf{Acknowledgements}
This research was supported by the Natural Sciences and Engineering Research Council (NSERC) of Canada and the Fonds de recherche du Qu\'ebec - Nature et technologies (FQRNT).  \revised{The authors thank the editor and anonymous referee for valuable comments which improved the paper.}

\appendix
\setcounter{thm}{0}
    \renewcommand{\thethm}{\Alph{section}\arabic{thm}}
\section{Additional Results and Proofs}
\label{Proofs}
This appendix is devoted to the proofs of \prettyref{lem:Monot} and \prettyref{thm:UniqueAlpha}.  The following two lemmas are required to prove \prettyref{lem:Monot}.

\begin{lem}
\label{lem:RuinProp}For any $T,a,k>0$ we have \textup{$\mathbb{Q}(\int_{0}^{T}e^{-aB_{s}}ds<k)>0,$}
where $B_{s}$ is a standard $\mathbb{Q}$-Brownian motion process\textup{}%
\textup{.}\end{lem}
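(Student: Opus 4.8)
The plan is to exhibit an explicit event of positive $\mathbb{Q}$-probability on which $\int_{0}^{T}e^{-aB_{s}}\,ds<k$. The intuition is that if the path $s\mapsto B_{s}$ stays close to a steeply rising straight line then $e^{-aB_{s}}$ decays quickly and the integral is small; near $s=0$, where $B_{s}\approx 0$ and the integrand is close to $1$, only an arbitrarily small amount is contributed once the line is steep enough.

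Step one: fix a slope $M>0$ and set $A_{M}:=\{\sup_{0\le s\le T}|B_{s}-Ms|<1\}$. On $A_{M}$ one has $B_{s}>Ms-1$ for every $s\in[0,T]$, hence
\[
\int_{0}^{T}e^{-aB_{s}}\,ds<e^{a}\int_{0}^{T}e^{-aMs}\,ds=e^{a}\,\frac{1-e^{-aMT}}{aM}<\frac{e^{a}}{aM}.
\]
Taking $M:=e^{a}/(ak)$ makes the final bound equal to $k$, so $A_{M}\subseteq\{\int_{0}^{T}e^{-aB_{s}}\,ds<k\}$, and it remains only to show $\mathbb{Q}(A_{M})>0$.

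Step two is the only real content. I would obtain $\mathbb{Q}(A_{M})>0$ from the Cameron--Martin/Girsanov theorem: since the constant integrand $M$ is square-integrable on $[0,T]$, the strictly positive martingale $\exp(MB_{T}-\frac{1}{2}M^{2}T)$ serves as the density on $\mathcal{F}_{T}$ of a probability measure $\widetilde{\mathbb{Q}}$ equivalent to $\mathbb{Q}$ under which $\widetilde{B}_{s}:=B_{s}-Ms$ is a standard Brownian motion on $[0,T]$. Consequently $\widetilde{\mathbb{Q}}(A_{M})=\widetilde{\mathbb{Q}}(\sup_{0\le s\le T}|\widetilde{B}_{s}|<1)=\mathbb{Q}(\sup_{0\le s\le T}|B_{s}|<1)$, which is strictly positive by the standard small-ball property of Brownian motion; equivalence of $\widetilde{\mathbb{Q}}$ and $\mathbb{Q}$ then forces $\mathbb{Q}(A_{M})>0$, and together with Step one this proves the lemma. (Alternatively, $\mathbb{Q}(A_{M})>0$ is immediate from the support theorem for Brownian paths, as $s\mapsto Ms$ is continuous and vanishes at $s=0$; or one can split $[0,T]$ into short subintervals and chain elementary Gaussian tail bounds. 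I expect the Girsanov route to be the shortest to make rigorous, and no step beyond verifying $\mathbb{Q}(A_{M})>0$ should cause any difficulty.)
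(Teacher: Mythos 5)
Your proof is correct, but it takes a genuinely different route from the paper's. You force the path into a unit tube around the steep line $s\mapsto Ms$, note that on this event the integral is at most $e^{a}/(aM)$, choose $M=e^{a}/(ak)$, and then get positivity of the tube event from the Cameron--Martin/Girsanov change of measure (with the constant drift $M$, Novikov is trivial) combined with the small-ball property $\mathbb{Q}(\sup_{0\le s\le T}|B_{s}|<1)>0$; equivalence of the two measures transfers positivity back to $\mathbb{Q}$. All steps are sound. The paper instead splits $[0,T]$ at $u=\min\bigl(k/(2e^{a}),T\bigr)$ and works with two one-sided events: $A=\{B_{s}>-1 \text{ on } [0,u]\}$, which caps the initial contribution by $k/2$, and $C=\{B_{s}>M \text{ on } [u,T]\}$ with $e^{-aM}=k/(2(T-u))$, which caps the tail contribution by $k/2$; positivity of $\mathbb{Q}(A\cap C)$ is then assembled from an explicit one-sided hitting-time formula together with a Brownian-bridge maximum result and a conditioning chain. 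Your argument buys brevity and robustness (it would work verbatim for any drifted tube, and the support theorem gives the same conclusion in one line), at the cost of invoking the measure-change machinery; the paper's argument is more elementary in its toolkit (explicit distributional formulas, no Girsanov) but requires the more delicate two-stage conditioning to glue the events together. Either way the lemma is established.
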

\begin{proof}\footnote{The authors thank Dr.\ Anthony Quas, University of Victoria, for helpful discussion concerning the proof of \prettyref{lem:RuinProp}.}
\revised{Let $u=\min{\left(\frac{k}{2e^{a}},T\right)}$.}
Write $\int_{0}^{T}e^{-aB_{s}}ds=\int_{0}^{u}e^{-aB_{s}}ds+\int_{u}^{T}e^{-aB_{s}}ds.$
\revised{We consider the two cases $u=T$ and $u<T$ separately by conditioning} on the events $A=\{B_{s}>-1;\forall s\in[0,u]\}$
and $C=\{B_{s}>M;\forall s\in[u,T]\}$, where $M$ satisfies $e^{-aM}=\frac{k}{2(T-u)}$.  \\
\revised{\underline{(i) $u=T$}:} \\
\revised{Conditioning on the event $A$ we find that 
\[
\mathbb{Q}\left( \int_{0}^{T}e^{-aB_{s}}ds<k \right) \geq \mathbb{Q}\left( \int_{0}^{T}e^{-aB_{s}}ds<k | A \right) \mathbb{Q}(A) = \mathbb{Q}(A)
\]
since $A$ implies $\int_{0}^{T}e^{-aB_{s}}ds<Te^{a}\leq\frac{k}{2}$ and $\mathbb{Q}\left( \int_{0}^{T}e^{-aB_{s}}ds<k | A \right) = 1$.}  By \citep[formula 1.1.2.4]{RefWorks:42}
\begin{equation}
\revised{\mathbb{Q}}_{x}(\inf_{0\leq s\leq t}\tilde{B}_{s}>y)=2\Phi(\frac{x-y}{\sqrt{t}})-1,\quad y\leq x, \label{eq:a1}
\end{equation}
where $\Phi$ is the cdf of the standard normal distribution \revised{and $\tilde{B}_{t}$ is a Brownian motion with $\tilde{B}_0=x$ a.s.\ under $\mathbb{Q}_x$}.  \revised{Hence, equation~(\ref{eq:a1}), with $x=0$; $y=-1$; and $t=T$, implies that $\mathbb{Q}(A)>0$.  That is, $\mathbb{Q}(\int_{0}^{T}e^{-aB_{s}}ds<k)>0$ as desired.}

\noindent\revised{\underline{(ii) $u < T:$}}
\\
Conditioning on the events $A$ and $C$ we find that 
\begin{align*}
 \mathbb{Q}\left(\int_{0}^{T}e^{-aB_{s}}ds<k \right) &\geq \mathbb{Q}\left(\int_{0}^{u}e^{-aB_{s}}ds+\int_{u}^{T}e^{-aB_{s}}ds<k | \ A\cap C \right)\mathbb{Q}(A\cap C) \\ &= \mathbb{Q}(A\cap C)
\end{align*}
since $A$ implies $\int_{0}^{u}e^{-aB_{s}}ds<u e^{a} = \frac{k}{2}$ and $C$ implies $\int_{u}^{T}e^{-aB_{s}}ds<(T-u)e^{-aM}=\frac{k}{2}$.  
\revised{Similar to the first case we have $\mathbb{Q}(A) > 0 $ by equation~(\ref{eq:a1}).} 
\revised{To see $\mathbb{Q}(C\mid A)>0$, we introduce $D=\{B_u > M+\epsilon\}$ where $\epsilon >0$ and $B_0=0$. Then $\mathbb{Q}(D)>0$. We could use reflection-type arguments to show that $\mathbb{Q}(A\mid D)>0$ but we refer to a more complete result in \citep[Proposition 4.3.5.3]{RefWorks:37} on the maximum of a general Brownian bridge. Conditioning on $D$ and taking expectations w.r.t. $B_u$, that result can be used to derive an explicit expression for $\mathbb{Q}(A\mid D)$. Therefore $\mathbb{Q}(D\mid A)>0$.}

\revised{Finally, equation~(\ref{eq:a1}), with $x=M+\epsilon$; $y=M$; and $t=T-u$, implies $\mathbb{Q}(C\mid A\cap D)>0$. Thus 
$$\mathbb{Q}(A\cap C)=\mathbb{Q}(A)\mathbb{Q}(C\mid A)\geq \mathbb{Q}(A)\mathbb{Q}(C\mid A\cap D)\mathbb{Q}(D\mid A)>0.$$}

\end{proof}

\begin{lem}
\label{lem: probruin}For any fee rate $\al$ and guaranteed withdrawal
rate $g$ there is a positive probability that the contract matures
with a positive account value. That is, \[
\Q(W_{T}^{P,0}>0)>0\]
 for all $P>0$, $g>0$, and $\al\geq0$, where $W_{T}^{P,0}$
is given by \prettyref{eq:W}. \end{lem}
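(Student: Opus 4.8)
The plan is to reduce the event $\{W_T^{P,0}>0\}$ to the type of event already handled by \prettyref{lem:RuinProp}. Starting from the representation \prettyref{eq:W1}, write
\[
W_T^{P,0}=\max\bigl[\,0,\ Z_T\bigl(P-G\textstyle\int_0^T Z_s^{-1}\,ds\bigr)\,\bigr].
\]
Since $Z_T=e^{(r-\alpha-0.5\sigma^2)T+\sigma B_T}>0$ almost surely, and since $G=gP$ and $T=1/g$ give $P/G=T$, we obtain the equivalence
\[
\{W_T^{P,0}>0\}=\Bigl\{\int_0^T Z_s^{-1}\,ds<T\Bigr\}.
\]
So it suffices to show that this integral is strictly less than $T$ with positive probability, uniformly over $\alpha\ge0$ and $g>0$.

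Next I would strip off the deterministic drift. Writing $Z_s^{-1}=e^{-(r-\alpha-0.5\sigma^2)s}\,e^{-\sigma B_s}$ and setting $K:=\sup_{0\le s\le T}e^{-(r-\alpha-0.5\sigma^2)s}=e^{|r-\alpha-0.5\sigma^2|\,T}<\infty$, we have $\int_0^T Z_s^{-1}\,ds\le K\int_0^T e^{-\sigma B_s}\,ds$, hence
\[
\Bigl\{\int_0^T e^{-\sigma B_s}\,ds<\frac{T}{K}\Bigr\}\subseteq\{W_T^{P,0}>0\}.
\]
Applying \prettyref{lem:RuinProp} with $a=\sigma>0$ and $k=T/K>0$ gives $\mathbb{Q}\bigl(\int_0^T e^{-\sigma B_s}\,ds<T/K\bigr)>0$, and therefore $\mathbb{Q}(W_T^{P,0}>0)>0$, which is the claim.

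I do not expect a serious obstacle here: the genuine difficulty is packaged inside \prettyref{lem:RuinProp}, and the argument above is just the reduction to it. The only points requiring a moment of care are that $Z_T>0$ almost surely (so passing to $P-G\int_0^T Z_s^{-1}\,ds$ does not introduce a spurious null event), the identity $P/G=T$ coming from the product specification, and the crude bound on the drift factor over the compact interval $[0,T]$ — none of which depends on the sign of $r-\alpha-0.5\sigma^2$, so the bound is uniform in $\alpha$ and $g$ as required. (If one preferred not to invoke \prettyref{lem:RuinProp}, one could instead condition directly on an event of the form $\{B_s>-1,\ s\in[0,\delta]\}\cap\{B_s>M,\ s\in[\delta,T]\}$ for suitable $\delta,M$ and split the integral over $[0,\delta]$ and $[\delta,T]$, but this merely reproduces the proof of that lemma.)
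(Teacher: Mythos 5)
Your proposal is correct and follows essentially the same route as the paper: reduce $\{W_T^{P,0}>0\}$ to the event $\{\int_0^T e^{-(r-\alpha-0.5\sigma^2)s-\sigma B_s}\,ds < P/G\}$, bound the deterministic drift factor uniformly over $[0,T]$ (your constant $K$ plays exactly the role of the paper's constant $c$), and invoke \prettyref{lem:RuinProp} with $a=\sigma$. The only cosmetic slip is the claimed equality $\sup_{0\le s\le T}e^{-(r-\alpha-0.5\sigma^2)s}=e^{|r-\alpha-0.5\sigma^2|T}$, which fails when $r-\alpha-0.5\sigma^2>0$ (the sup is then $1$), but since $e^{|r-\alpha-0.5\sigma^2|T}$ is still an upper bound the argument is unaffected.
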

\begin{proof}
Note that $W_{T}^{P,0}>0$ if and only if \[\frac{P}{G}>\int_{0}^{T}e^{-(r-\al-0.5\sigma^{2})s-\sigma B_{S}}ds.\]
By bounding and removing the deterministic portion from the integrand,
we have\[
\frac{P}{G}>\int_{0}^{T}e^{-(r-\al-0.5\sigma^{2})s-\sigma B_{S}}ds\]
if \[
\frac{P}{G}c^{-1}>\int_{0}^{T}e^{-\sigma B_{s}}ds,\]
 where \[
c=\begin{cases}
e^{-(r-\al-0.5\sigma^{2})T} & \text{if }(r-\al-0.5\sigma^{2})<0,\\
1 & \text{otherwise}.\end{cases}\]
The desired conclusion follows from that fact that $\Q(\int_{0}^{T}e^{-aB_{s}}ds<k)>0$ for all $T,\ a,\ k>0$ of \prettyref{lem:RuinProp}. 
\end{proof}

\begin{proof}[Proof of \prettyref{lem:Monot}]
We fix $P$ and $g$ and omit them from the notation. A monotonicity
result is obtained by applying a comparison result for SDEs from \citep[Proposition 2.18]{RefWorks:13}.
Since $\alpha$ appears as a negative drift term in the SDE for $W_{t}$
in \prettyref{eq:dWt}, we have $W_{t}(\alpha_{1})\geq W_{t}(\alpha_{2})$
a.s.\ for all $t\in[0,T]$ and for all $0\leq\alpha_{1}<\alpha_{2}$.
Thus $E_{\Q}[W_{T}(\alpha_{1})]\geq E_{\Q}[W_{T}(\alpha_{2})]$ which
implies $V_{0}(\alpha_{1})\geq V_{0}(\alpha_{2})$.

To prove the strictly decreasing property of $V_{0}(P,\alpha,g)$ note from \prettyref{lem: probruin} that $\Q(A^{\alpha})>0\text{ for all }\alpha\geq0$ 
where $A^{\alpha}:=\{W_{T}(\alpha)>0\}$.
On the event $A^{\alpha}$ we have \[
W_{T}(\alpha)=e^{(r-\alpha-0.5\sigma^{2})T+\sigma B_{T}}\times\left(P-G\int_{0}^{T}e^{-(r-\alpha-0.5\sigma^{2})s-\sigma B_{s}}ds\right).\]
Let $0\leq\alpha_{1}<\alpha_{2}=\alpha_{1}+h,$ where $h$ takes an
arbitrary positive value. Restricted to the set $A^{\alpha_{1}+h}$,
we obtain\[
W_{T}(\alpha_{1}+h)\leq e^{-hT}W_{T}(\alpha_{1})<W_{T}(\alpha_{1})\]
implying that $A^{\alpha_{1}}\supseteq A^{\alpha_{1}+h}$. It follows
that \begin{align*}
V_{0}(\alpha_{1}+h) & =G\bar{a}_{\lcroof{T}}+E_{Q}\left(e^{-rT}W_{T}(\alpha_{1}+h)\mathbf{1}_{\left\{ A^{\alpha_{1}+h}\right\} }\right)\\
 & <G\bar{a}_{\lcroof{T}}+E_{Q}\left(e^{-rT}W_{T}(\alpha_{1})\mathbf{1}_{\left\{ A^{\alpha_{1}+h}\right\} }\right)\\
 & \leq V_{0}(\alpha_{1}).\end{align*}

To prove continuity fix $\alpha\geq0$. Let $h>0$ and denote\[
X_{T}^{h}:=e^{\sigma B_{T}}\max\left(0,P-G\int_{0}^{T}e^{-(r-\alpha-h-\frac{1}{2}\sigma^{2})s-\sigma B_{s}}ds\right).\]
From \prettyref{eq:W}, \[
E_{Q}(W_{T}(\alpha+h))=e^{(r-\alpha-h-\frac{1}{2}\sigma^{2})T}E_{Q}\left(X_{T}^{h}\right).\]
Then $X_{T}^{h}\geq0$ for all $h\geq0$, and $X_{T}^{h}\uparrow$
a.s.\ as $h\downarrow0$. Applying the Monotone Convergence theorem
and by the continuity of the max function, \[
\lim_{h\downarrow0}E_{\Q}(X_{T}^{h})=E_{\Q}(X_{T}^{h=0}).\]
The Dominated Convergence theorem was used to interchange the limit
and the path-wise Lebesgue-Stieltjes integral. Therefore $\lim_{h\downarrow0}E_{\Q}(W_{T}(\alpha+h))=E_{\Q}(W_{T}(\alpha))$. 

If $\alpha>0$, then let $h<0$ and $\lim_{h\uparrow0}E_{\Q}(W_{T}(\alpha+h))=E_{\Q}(W_{T}(\alpha))$
is obtained using similar arguments. The Monotone Convergence theorem
no longer applies; instead the Dominated Convergence theorem justifies
interchanging the expectation and limit since $X_{T}^{h}\leq Pe^{\sigma B_{T}}$
and $E_{\Q}(e^{\sigma B_{T}})=e^{0.5\sigma^{2}T}<\infty$. Therefore
the continuity of $V_{0}$ follows from \prettyref{eq:V0}.\end{proof}

\begin{proof}[Proof of \prettyref{thm:UniqueAlpha}]
The existence of $\alpha^{\star}$ is obtained by showing that both
$V_{0}(P,0,g)\geq P$ and $\lim_{\alpha\to\infty}V_{0}(P,\alpha,g)<P$
and applying the continuity result from \prettyref{lem:Monot}.

When $\alpha=0$, the guarantee is offered at no charge and it is
obvious that $V_{0}\geq P$. More formally, setting $\alpha=0$ we
have from \prettyref{eq:W} \[
W_{T}\geq\left[Pe^{(r-0.5\sigma^{2})T+\sigma B_{T}}-G\int_{0}^{T}e^{(r-0.5\sigma^{2})(T-s)+\sigma(B_{T}-B_{s})}ds\right],\]
 and since $E_{\Q}[e^{-0.5\sigma^{2}t+\sigma B_{t}}]=1$, we obtain
from \prettyref{eq:V0} that\begin{align*}
V_{0}(P,0,g) & \geq P+E_{\Q}\left[\int_{0}^{T}e^{-rs}G\left(1-e^{-(0.5\sigma^{2})(T-s)+\sigma(B_{T}-B_{s})}\right)ds\right] %
=P,\end{align*}
 where the expectation on the right evaluates to zero by Fubini's
theorem. 

As $\alpha\to\infty$, it becomes certain that the embedded GMWB option
will be exercised and thus $V_{0}=G\bar{a}_{\lcroof{T}}$. More formally,
for $\alpha>0$ we have \begin{equation}
0\leq W_{T}(\alpha)\leq Pe^{-\alpha T}e^{(r-0.5\sigma^{2})T+\sigma B_{T}}\leq Pe^{(r-0.5\sigma^{2})T+\sigma B_{T}}\label{eq:Uniqalphainequ}\end{equation}
 a.s., and $E_{\Q}[Pe^{(r-0.5\sigma^{2})T+\sigma B_{T}}]=Pe^{rT}<\infty$.
The property $B_{T}<\infty$ a.s.\ combined with \prettyref{eq:Uniqalphainequ}
gives $\underset{\alpha\to\infty}{\lim}W_{T}(\alpha)=0$ a.s. Applying
the Dominating Convergence theorem, \[
\lim_{\alpha\to\infty}V_{0}(P,\alpha,g)=G\int_{0}^{T}e^{-rs}ds<GT=P,\]
for $r>0$.

The uniqueness of the solution follows directly from the strictly
decreasing property for $V_{0}(P,\alpha,g)$ from \prettyref{lem:Monot}. \end{proof}

\bibliographystyle{model2-names}
\bibliography{wmendy-RefList1}
\end{document}